\documentclass[12pt]{article}

\usepackage[cp1251]{inputenc}

\usepackage{amsfonts}
 \usepackage{amsthm}
 \usepackage{amsmath,bm}
 \usepackage{amscd}
 \usepackage{amssymb}

\usepackage[english]{babel}

 \usepackage{graphics}
\usepackage{graphicx}

 \newtheorem{thm}{Theorem}
 
 \newtheorem{prop}{Proposition}
 \theoremstyle{definition}

\title{A note on the integrability status of a 'mysterious' case of a quadratic Hamiltonian}
\author{Ognyan Christov\\
Faculty of Mathematics and Informatics, Sofia University, \\
5 J. Bouchier blvd., 1164 Sofia, Bulgaria}
\date{}

\begin{document}

 \maketitle

\begin{abstract}
\noindent
We present a simple algebraic proof of meromorphic
non-integrability of a 'mysterious' case of a quadratic
Hamiltonian studied by Sokolov and Wolf. The result is obtained by
applying the Morales-Ramis method, based on differential Galois
theory.
\end{abstract}

{\bf Mathematics Subject Classification}: 70H05; 70H07; 34M15

{\bf Keywords}: Hamiltonian system, non-integrability, Morales-Ramis theory

\section{Introduction and main result}
\label{intro}

Sokolov and Wolf studied the integrability of a special class of
quadratic Hamiltonians describing various physical models
\cite{SW}
\begin{equation}
\label{1.1} H = c_1 (\mathbf{a}, \mathbf{b}) |\mathbf{M} |^2 + c_2
(\mathbf{a}, \mathbf{M}) (\mathbf{b}, \mathbf{M}) + (\mathbf{b},
\mathbf{M}\times \mathbf{\Gamma}),
\end{equation}
where $c_1, c_2$, $\mathbf{a} = (a_1, a_2, a_3)$ and $\mathbf{b} =
(b_1, b_2, b_3)$ are parameters, $\mathbf{M} = (M_1, M_2, M_3)$
and $\Gamma = (\gamma_1, \gamma_2, \gamma_3)$.

Together with the above Hamiltonians, they consider the following
family of Poisson brackets
\begin{equation}
\label{1.2} \{M_i, M_j \} = \varepsilon_{ijk} M_k, \quad \{M_i,
\gamma_j \} = \varepsilon_{ijk} \gamma_k, \quad \{\gamma_i,
\gamma_j \} = K \varepsilon_{ijk} M_k .
\end{equation}
Here $K$ is a parameter and $\varepsilon_{ijk}$ is the totally
skew-symmetric tensor. The cases $K=0$, $K>0$ and $K<0$ correspond
to the Lie algebras $e (3)$, $so (4)$ and $so (3, 1)$.

The bracket (\ref{1.2}) admits two Casimirs
\begin{equation}
\label{1.3} J_1 = (\mathbf{M}, \mathbf{\Gamma}), \qquad J_2 = K
|\mathbf{M} |^2 + |\mathbf{\Gamma} |^2.
\end{equation}
Therefore, in order to achieve  complete integrability, one more
first integral that is functionally independent of the Hamiltonian
(\ref{1.1}) and the Casimirs (\ref{1.3}) is needed. Notice also
that the Hamiltonian and the bracket are invariant with respect to
the transformations
\begin{equation}
\label{1.4} \mathbf{M} \to T \mathbf{M}, \qquad \mathbf{\Gamma}
\to T \mathbf{\Gamma},
\end{equation}
where $T$ is a constant orthogonal matrix. As a result of such a
transformation, we can achieve that the vectors $\mathbf{a},
\mathbf{b}$ become $\mathbf{a} = (a_1, 0, a_3)$ and $\mathbf{b} =
(0, 0, 1)$.

Using the Kowalewski-Lyapounov test, Sokolov and Wolf found the
parameter values for which integrability is possible. These values
are
\begin{enumerate}
\item [(1)] $c_1$ - arbitrary, $c_2 = 0$,

\item [(2)] $c_1 = 1$, $c_2 = -2$,

\item [(3)] $c_1 = 1$, $c_2 = -1$,

\item [(4)] $c_1 = 1$, $c_2 = -1/2$,

\item [(5)] $c_1 = 1$, $c_2 = 1$.
\end{enumerate}
Indeed, they succeeded in finding polynomial first integrals in
all these cases except (5), for which there are no polynomial
first integrals up to degree 8; this is why they called it
'mysterious'.

With the parameters $c_1, c_2$ as in case (5), the Hamiltonian
(\ref{1.1}) takes the form
\begin{equation}
\label{1.5} H = a_3 (M_1^2 + M_2^2) + 2 a_3 M_3^2 + a_1 M_1 M_3 +
M_1 \gamma_2 - M_2 \gamma_1.
\end{equation}
As suggested by Sokolov and Wolf, we investigate the integrability
of this Hamiltonian when the Poisson bracket parameter $K < 0$
(that is, the Poisson bracket has the type $so (3,1)$) is related
to the length of $\mathbf{a}$ by the relation
\begin{equation}
\label{1.6} a_1 ^2 + a_3 ^2 = - K.
\end{equation}

The study of the integrability makes sense since the Hamiltonian
(\ref{1.5}) still depends on the parameters $a_1, a_3$. Sakovich
\cite{Sakovich} has found that this case does not pass the
Painlev\'{e} test when $a_1, a_3 \neq 0$. However, when $a_1 = 0$
there is an additional first integral $M_3 = const$. When $a_3 =
0$ the Hamiltonian reduces to the case (3) above, in which an
additional polynomial first integral of degree four exists.

In this short note we give an alternative proof of the Sakovich
result:
\begin{thm}
\label{Th1} The Hamiltonian (\ref{1.5}) does not admit an
additional meromorphic first integral when $a_1, a_3 \neq 0$, that
is, the system corresponding to (\ref{1.5}) is non-integrable.
\end{thm}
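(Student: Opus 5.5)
We apply the Morales--Ramis theorem: if the system generated by (\ref{1.5}) with the bracket (\ref{1.2}) is meromorphically integrable on a generic symplectic leaf $J_1=c$, $J_2=d$, then the identity component of the differential Galois group of the normal variational equations (NVE) along any non-equilibrium particular solution is abelian. The plan is to find an explicit particular solution, reduce its NVE to a second-order linear equation with rational coefficients (hypergeometric or Riemann $P$-type), and show that the identity component of its Galois group is not abelian whenever $a_1a_3\neq 0$ and $a_1^2+a_3^2=-K$.

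\textbf{Equations and particular solution.} First we write the Euler--Poisson type equations $\dot M_i=\{M_i,H\}$ and $\dot\gamma_i=\{\gamma_i,H\}$, using (\ref{1.2}). Next we look for an invariant plane of low dimension. A natural candidate is $M_1=M_2=0$, or more generally $M=0$ with $\Gamma$ evolving. Since $H$ is quadratic, $M\equiv0$ is invariant with $\dot\gamma$ linear in $M$, so this gives only equilibria. We therefore use instead $M_1=M_2=\gamma_3=0$ (or $M_2=\gamma_1=\gamma_3=0$), and we adjust the choice using the symmetry from the rotations (\ref{1.4}) and the term $a_1M_1M_3$. The goal is a two-dimensional invariant manifold on which the flow reduces to a Riccati-type equation like $\dot M_3=\alpha\gamma_2$, $\dot\gamma_2=\beta M_3\gamma_2+\dots$, which we integrate via the energy level. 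If no coordinate plane works because of the $a_1$ term, we look for a straight-line (Darboux-point) solution $M=\phi(t)\mathbf{m}$, $\Gamma=\psi(t)\mathbf{g}$ with constant vectors solving an algebraic system. Relation (\ref{1.6}) should be exactly what makes such a system solvable, with $\mathbf{m}$ aligned with $\mathbf{a}$ in the $(1,3)$-plane.

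\textbf{Variational equations.} We linearize along the particular solution and separate the tangential directions, together with those coming from the Casimirs $J_1,J_2$ and from $H$. This leaves a $2\times2$ NVE system on the symplectic leaf. We then change the independent variable to $z=\phi(t)$ (or $z=e^{\lambda t}$), which converts the NVE into a Fuchsian second-order equation with rational coefficients in $z$. We expect three regular singular points, i.e.\ a Gauss hypergeometric equation, with exponent differences depending on the ratio $a_1/a_3$ or on $a_3^2/(a_1^2+a_3^2)$.

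\textbf{Galois group analysis (the main obstacle).} We apply Kimura's theorem, which lists the solvable cases of the hypergeometric equation, or equivalently Kovacic's algorithm. We must show that for $a_1a_3\neq0$ none of the Kimura cases occurs with abelian identity component, and that the degenerate cases $a_1=0$ and $a_3=0$ are exactly where solvability holds, which is consistent with the known integrable subcases. The hard points are choosing the energy and Casimir levels generically and handling parameter values where an exponent difference is an integer, since logarithmic terms must then be detected. For these values we check directly whether the local monodromy at that singularity is non-diagonalizable, which gives a non-abelian group. If the exponents instead depend only on $K$ through (\ref{1.6}) and are fixed numbers, the analysis becomes a single Kovacic computation, and non-abelianity should then follow from $a_1a_3\neq0$ entering an accessory parameter.
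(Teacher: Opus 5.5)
\textbf{This is a genuine gap.} Your proposal is a strategy, not a proof. It never exhibits a non-equilibrium particular solution, never writes down the NVE, and never computes a Galois group, and these three steps are the entire content of the argument.

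The concrete leads you do give also fail. The sets $\{M_1=M_2=\gamma_3=0\}$ and $\{M_2=\gamma_1=\gamma_3=0\}$ are not invariant: on the first, (\ref{3.1}) gives $\dot M_1=M_3\gamma_1$ and $\dot\gamma_3=-\gamma_1^2-\gamma_2^2-a_1M_3\gamma_2$; on the second, $\dot\gamma_1=4a_3M_3\gamma_2+a_1M_1\gamma_2-KM_1M_3$. Your reason for discarding $\mathbf M\equiv 0$ is false. The brackets $\{M_i,\gamma_j\}\neq 0$ produce terms quadratic in $\Gamma$, and on $\mathbf M=0$ one has $\dot\gamma_1=\gamma_1\gamma_3$, $\dot\gamma_2=\gamma_2\gamma_3$, $\dot\gamma_3=-\gamma_1^2-\gamma_2^2$. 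The Darboux-point fallback cannot give an obstruction at first order either. Write the system as $\dot{\mathbf x}=F(\mathbf x)$ with $F$ homogeneous quadratic. Then $\mathbf x=\varphi(t)\mathbf c$ with $F(\mathbf c)=\mathbf c$ forces $\varphi=-1/(t-t_0)$, and such a solution lies on the degenerate level $H=J_1=J_2=0$. Its variational equation is the Euler system $\dot\xi=-(t-t_0)^{-1}DF(\mathbf c)\,\xi$, whose Galois group is the Zariski closure of a cyclic monodromy group and hence abelian. So the hoped-for hypergeometric equation and the Kimura/Kovacic analysis have nothing to rest on.

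The paper instead uses the invariant set $\{M_1=M_3=\gamma_2=0\}$, on which $J_1=0$. There $H=h$ gives $\gamma_1=a_3M_2-h/M_2$ and $\gamma_3=-\dot M_2/M_2$. Then $J_2$ together with (\ref{1.6}) gives $\dot M_2^2=a_1^2M_2^4+cM_2^2-h^2$ for a constant $c$, so the solution is elliptic. Along it the NVE is triangular and is solved by the quadratures $I=\exp(a_1\int M_2\,dt)$ and $J$, with no Kimura or Kovacic needed; the intended non-commutativity comes from a $\mathbb C^*$ factor generated by $I$.

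Be aware that this route does not close the gap either. From $\ddot M_2=2a_1^2M_2^3+cM_2$ one checks that $W=2a_1^2M_2^2+c+2a_1\dot M_2$ satisfies $\dot W=2a_1M_2W$. Hence $I^2$ is a constant multiple of the elliptic function $W$, so $\sigma(I)=\pm I$. The identity component of the Galois group of the $2\times 2$ NVE therefore lies in the additive group of unipotent upper-triangular matrices, which is abelian. The paper's $M_2=\mathrm{sn}(a_1t,\kappa)$ does not satisfy the reduced equation $-M_2\ddot M_2+\dot M_2^2=\alpha^2-a_1^2M_2^4$ unless $\kappa^2=1$. The actual solution is $M_2=\kappa\,\mathrm{sn}(a_1t,\kappa)$, for which $I=\mathrm{dn}(a_1t,\kappa)-\kappa\,\mathrm{cn}(a_1t,\kappa)$, and the irrational exponent $1/\kappa$ disappears. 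Indeed (\ref{1.6}) is exactly the condition that makes the residues of $a_1M_2$ at its poles equal to $\pm 1$.

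So the step you call the main obstacle is the entire difficulty: you must produce a particular solution whose NVE has non-abelian $G^0$, or else pass to higher-order variational equations. Neither your sketch nor this particular solution supplies it.
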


This result also answers  the question of the integrability of a
Hamiltonian, equivalent to (\ref{1.5}), but with a different
normalization, raised in the recent review article \cite{Sokolov}.

We use the Ziglin-Morales-Ramis theory for integrability of
Hamiltonian systems based on differential Galois theory. The
application of the Ziglin-Morales-Ramis theory requires
 a non-equilibrium solution to the corresponding Hamiltonian
system. Then, the Galois group of the variational or normal
variational equations along this particular solution has to be
studied. For the Hamiltonian system under consideration, we
succeeded in finding the solutions of the normal variational
equations in quadratures. After that,  finding the Galois group is
easy. Similar calculations of the differential Galois group were
performed in \cite{MSS,OC,OC1}. It should be noted that this is
not the case in general: finding the differential Galois group is
a difficult task.

The outline of this note is as follows. In Section 2 we summarize,
for the reader's convenience, some facts about differential Galois
theory and Ziglin-Morales-Ramis theory about Hamiltonian
integrability. The proof of Theorem \ref{Th1} is carried out in
Section 3.

\section{Preliminaries}
\label{prel}

In this section, we briefly recall some notions and results
related to Ziglin-Morales-Ramis theory, which deals with
integrability in complex domains.

We are given a Hamiltonian system
\begin{equation}
\label{2.1}
\dot{\mathbf{x}} = X_{H} (\mathbf{x}), \quad t \in \mathbb{C}, \quad \mathbf{x} \in M
\end{equation}
corresponding to an analytic Hamiltonian $H$, defined on the
complex $2 n$-dimensional manifold $M$. Similarly to the real case
\cite{AKN}, we call such a Hamiltonian system  integrable in the
sense of Liouville if there exist $n$  independent (almost
everywhere) first integrals in involution.

Suppose the system (\ref{2.1}) has a non-equilibrium solution
$\Phi (t)$. Denote by $\mathbb{G}$ its phase curve. Along this
solution we can write the variational equations (VE)
\begin{equation}
\label{2.2}
\dot{\boldsymbol\xi} = D X_{H} ( \Phi (t)) \boldsymbol\xi, \quad \boldsymbol\xi \in T_{\mathbb{G}} M.
\end{equation}
The first integral $H$ gives rise to a linear integral $d H$ of
the variational equations. Using the integral $d H$ we can reduce
the variational equations. Consider the normal bundle of $
\mathbb{G}$, $F:= T_{\mathbb{G}} M / TM$  and let $\pi :
T_{\mathbb{G}} M \to F$ be the natural projection. The system of
equations (\ref{2.2}) defines a system of equations on $F$
\begin{equation}
\label{2.3}
\dot{\boldsymbol\eta} = \pi_{*} (D X_{H} ( \Phi (t))(\pi^{-1} \boldsymbol\eta) , \quad \boldsymbol\eta \in F.
\end{equation}
These are called the normal variational equations (NVE). Each
meromorphic first integral of the Hamiltonian system (\ref{2.1})
in the neighborhood of the curve $ \mathbb{G}$ gives rise to a
meromorphic first integral of (NVE) \cite{Ziglin}. Hence, the
problem of complete integrability of the Hamiltonian system
(\ref{2.1}) reduces to the study of integrability of the linear
system (\ref{2.3}) (or (\ref{2.2})).

Consider such a linear non-autonomous system
\begin{equation}
\label{2.4}
\dot{\boldsymbol\xi} = A (t) \boldsymbol\xi, \quad \boldsymbol\xi \in \mathbb{C}^n ,
\end{equation}
with $t$ defined on some Riemann surface $\mathbb{G}$. By the
existence theorem  there is a fundamental matrix solution $\Psi
(t)$, analytic in a vicinity of any nonsingular point $t_0$.

Next, we recall briefly the necessary notions and results from the
differential Galois theory in order to understand the applications
to the integrability of Hamiltonian systems. The detailed
statements and proofs can be found in \cite{vPS}.

Denote the coefficient field in (\ref{2.4}) by $\mathbb{K}$. A
diffe\-ren\-tial field $\mathbb{K}$ is a field with a derivation
$\partial = '$, i.e. an additive mapping satisfying the Leibnitz
rule. A differential automorphism of $\mathbb{K}$ is an
auto\-mor\-phism commuting with the derivation.

Let $\Psi_{i j}$ be the elements of the fundamental matrix $\Psi
(t)$. Let $\mathbb{F} (\Psi_{i j})$ be the extension of
$\mathbb{K}$ generated by $\mathbb{K}$ and $\Psi_{i j}$, which is
a differential field. This extension is called a Picard-Vessiot
extension. The Galois group $G :=Gal (\mathbb{F}/\mathbb{K})$ is
defined to be the group of all differential automorphisms of
$\mathbb{F}$ leaving the elements of $\mathbb{K}$ fixed. The
Galois group is an algebraic group. It has a unique connected
component $G^0$ which contains the identity and which is a normal
subgroup of finite index. The Galois group $G$ can be represented
as an algebraic li\-near subgroup of $\mathrm{GL} (n, \mathbb{C})$
by
$$
\sigma \Psi (t) = \Psi (t) R_{\sigma},
$$
where $\sigma \in G$ and $R_{\sigma} \in \mathrm{GL} (n,
\mathbb{C})$.

 The fundamental result of the Morales-Ramis theory is then the following
 \begin{thm}
 \label{thMR}
 ({\rm Morales-Ruiz and  Ramis \cite{M}}) Suppose that a Hamiltonian system has $n$
  meromorphic first integrals in involution. Then, the identity component  $G^0$
  of the Galois group $G = Gal(\mathbb{F}/\mathbb{K})$ is abelian.
 \end{thm}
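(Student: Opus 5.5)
The plan is to follow the standard route of Morales-Ruiz and Ramis \cite{M}. First I turn the $n$ meromorphic first integrals of the Hamiltonian system (\ref{2.1}) into $n$ independent, Poisson-commuting rational invariants of the Galois group $G$ of the variational equations (\ref{2.2}). Then I show by a Lie-algebra argument that these invariants force the identity component $G^0$ to be abelian. Throughout, $\mathbb{K}$ is the field of meromorphic functions on the phase curve $\mathbb{G}$, completed by the singular points where the solution is not defined. The variational equations are a linear Hamiltonian system with coefficients in $\mathbb{K}$, so $G \subset \mathrm{Sp}(2n,\mathbb{C})$ with respect to the symplectic form $\omega$ induced on $T_{\mathbb{G}}M$.

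\emph{Step 1 (from integrals to invariants).} Let $F$ be a meromorphic first integral defined near $\mathbb{G}$. Write $\mathbf{x} = \Phi(t) + \boldsymbol\xi$ and expand $F$ in $\boldsymbol\xi$. The lowest-order homogeneous part $f^{0}(t,\boldsymbol\xi)$ is a ratio of homogeneous polynomials in $\boldsymbol\xi$, possibly of negative degree, with coefficients in $\mathbb{K}$. Comparing lowest-order terms in $dF/dt = 0$ shows that $f^0$ is a rational first integral of (\ref{2.2}). Consequently $g(\mathbf{c}) := f^0(t,\Psi(t)\mathbf{c})$ is independent of $t$. Since $f^0$ has coefficients in $\mathbb{K}$, $g$ is a rational function of $\mathbf{c}$ with constant coefficients, and it is invariant under $\mathbf{c} \mapsto R_\sigma \mathbf{c}$ for every $\sigma \in G$. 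Indeed, applying $\sigma$ fixes $f^0$ and sends $\Psi$ to $\Psi R_\sigma$.

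\emph{Step 2 (independence and involution).} Leading terms of functionally independent integrals need not be independent. Here I invoke Ziglin's lemma: after replacing $F_1,\dots,F_n$ by suitable polynomials in them, which remain in involution, the leading terms $f_1^0,\dots,f_n^0$ become functionally independent. Involution passes to leading terms, because $\{f_i^0,f_j^0\}$ (bracket taken with respect to $\omega$) is either zero or the leading term of $\{F_i,F_j\}=0$. Hence it vanishes identically. The outcome is $n$ independent rational invariants $g_1,\dots,g_n$ of $G$ on $\mathbb{C}^{2n}$, pairwise in involution.

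\emph{Step 3 (Lie algebra argument).} Let $\mathfrak{g} \subset \mathfrak{sp}(2n)$ be the Lie algebra of $G^0$. Each $X\in\mathfrak{g}$ is the Hamiltonian vector field of a quadratic form $h_X$. Invariance of the $g_i$ gives $\{h_X,g_i\}=0$ for all $i$. At a generic point $\mathbf{c}$ the differentials $dg_1,\dots,dg_n$ are independent and in involution. So the span of the $X_{g_i}(\mathbf{c})$ is Lagrangian, and its $\omega$-orthogonal complement is itself. The condition $\{h_X,g_i\}=0$ says that $X_{h_X}(\mathbf{c})$ lies in this complement, hence in the span of the $X_{g_i}(\mathbf{c})$. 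Therefore, for $X,Y\in\mathfrak{g}$,
\[
\{h_X,h_Y\}(\mathbf{c})=\omega\bigl(X_{h_X}(\mathbf{c}),X_{h_Y}(\mathbf{c})\bigr)=0 ,
\]
since each vector is a combination of the $X_{g_i}(\mathbf{c})$, which are pairwise $\omega$-orthogonal. This holds on a dense set, hence everywhere, so $[X,Y]=0$. Thus $\mathfrak{g}$ is abelian, and $G^0$, being connected, is abelian.

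For the normal variational equations (\ref{2.3}) the same argument runs on the reduced symplectic space, of dimension $2n-2$, using the remaining $n-1$ integrals. I expect the main obstacle to be Step 2, namely Ziglin's lemma guaranteeing independent leading terms. I would prove it by induction on the number of integrals. If the leading terms are dependent, there is a polynomial relation $P(f_1^0,\dots,f_k^0)=0$. Replacing $F_k$ by a suitable polynomial in $F_1,\dots,F_k$ then strictly raises the order of the leading term. A degree bound shows this process terminates, which yields independent leading terms.
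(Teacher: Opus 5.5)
The paper gives no proof of this theorem; it is quoted from Morales-Ruiz and Ramis. Your outline is their standard argument: leading terms plus Ziglin's lemma give $n$ independent rational invariants in involution of $G\subset\mathrm{Sp}(2n,\mathbb{C})$, and the Lagrangian-span argument makes the Lie algebra of $G^0$ abelian. For the variational equations (\ref{2.2}) this is sound.

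\textbf{The gap: extending to the NVE.} The genuine gap is your final sentence, that ``the same argument runs'' for the normal variational equations (\ref{2.3}), which is the version the paper actually applies. Let $h^0=dH(\Phi(t))\boldsymbol\xi$. Restricting the leading terms to $\{h^0=0\}$ and descending to the quotient by $\dot\Phi$ can destroy independence. Suppose $K$ is another integral with leading term $k^0$, and $c=H|_{\mathbb{G}}$. Then $(H-c)K$ has leading term $h^0k^0$. This is independent of $h^0$ and in involution with the other integrals, yet it vanishes identically on $\{h^0=0\}$. Nothing in Ziglin's lemma rules this out.

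\textbf{How to repair it.} Avoid rerunning the argument. $\ker dH(\Phi(t))$ is invariant under (\ref{2.2}) and contains the solution $\dot\Phi$. So the NVE is the induced system on the quotient bundle, written in a frame with entries in $\mathbb{K}$. Its Picard--Vessiot field $\mathbb{F}_N$ is therefore a differential subfield of $\mathbb{F}$. Restriction $G\to\mathrm{Gal}(\mathbb{F}_N/\mathbb{K})$ is then a surjective morphism of algebraic groups. The identity component of the target is the image of $G^0$, hence abelian. The same reasoning justifies the paper's later passage from (\ref{3.4}) to the subsystem (\ref{3.5}).

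\textbf{Smaller point: the base field.} Step 1 only shows that the coefficients of $f^0$ are meromorphic on $\mathbb{G}$. If $\mathbb{K}$ is the field of meromorphic functions on the curve completed by its singular points, you must also assume the integrals are meromorphic near the added points. That is the extra hypothesis of the refined version of the theorem. Otherwise, work over the meromorphic functions on $\mathbb{G}$ itself.

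\textbf{Smaller point: termination in Ziglin's lemma.} In your sketch, the order of $F_k$ need not increase when some orders are negative. Termination has to come from functional independence, not from the degrees of the initial leading terms. The quantity that works is
\[
D=v(dF_1\wedge\cdots\wedge dF_k)-\sum_i v(F_i)\ge 0,
\]
where $v$ is the order in $\boldsymbol\xi$, with each $d\xi_j$ counted with weight one.
\begin{itemize}
\item $D$ is finite because the $F_i$ are functionally independent.
\item $D=0$ exactly when the leading terms are independent.
\item $D$ drops by at least one when $F_k$ is replaced by $P(F_1,\dots,F_k)$, where $P$ is the irreducible weighted-homogeneous relation among the $f_i^0$. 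Then $\partial_kP(f_1^0,\dots,f_k^0)\neq0$, the new Jacobian is $\partial_kP(F_1,\dots,F_k)\,dF_1\wedge\cdots\wedge dF_k$, and the new order of $F_k$ exceeds the weighted degree of $P$.
\end{itemize}
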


Notice that this result is only a necessary condition. Proving
that the identity component of the Galois group is abelian does
not guarantee that the corresponding Hamiltonian system is
integrable.

Examples of the application of this theorem to prove the
non-integrabi\-li\-ty of important Hamiltonian systems can be
found, for instance, in \cite{M,MSS,OC,OC1}.

Let $\mathbb{K}$ be a differential field with an algebraically
closed subfield of constants. An extension $\mathbb{L} /
\mathbb{K}$ is called a Liouville extension of $\mathbb{K}$ if
$Const (\mathbb{K}) = Const (\mathbb{L})$ and there exists a tower
of extensions
$$
\mathbb{K} = \mathbb{L}_0 \subset \mathbb{L}_1 \subset \ldots
\subset \mathbb{L}_n = \mathbb{L}
$$
such that for $i = 1, \ldots , n$ $\mathbb{L}_i = \mathbb{L}_{i-1}
(\alpha_i) $ and one of the following holds: either

1) $\alpha_i ' \in \mathbb{L}_{i-1}$: we say that $\alpha_i$ is an
integral of an element of $\mathbb{L}_{i-1}$; or

2) $\alpha_i \neq 0$ $\alpha_i '/\alpha_i \in \mathbb{L}_{i-1}$:
we say $\alpha_i$ is an exponential of an integral of an element
of $L_{i-1}$; or

3) $\alpha_i$ is algebraic over $\mathbb{L}_{i-1}$.

It can be proven that $\mathbb{L}$ is a Liouville extension of
$\mathbb{K}$ if and only if the identity component $G^0$ of $G =
Gal (\mathbb{L} / \mathbb{K})$ is a solvable subgroup.

\section{Proof of Theorem \ref{Th1}}
\label{proofth1}

First, we write the equations of motion corresponding to the
Hamiltonian (\ref{1.5}):
\begin{align}
\label{3.1}
\dot{M}_1 &= 2 a_3 M_2 M_3 + a_1 M_1 M_2 + M_3 \gamma_1 - M_1 \gamma_3, \nonumber \\
\dot{M}_2 &= -2 a_3 M_1 M_3 + a_1 (M_3^2 -  M_1^2)  + M_3 \gamma_2 - M_2 \gamma_3, \nonumber \\
\dot{M}_3 &= - a_1 M_2 M_3 , \\
\dot{\gamma}_1 &= 4 a_3 M_3 \gamma_2 - 2 a_3 M_2 \gamma_3  + a_1 M_1 \gamma_2 - K M_1 M_3 + \gamma_1 \gamma_3, \nonumber \\
\dot{\gamma}_2 &= 2 a_3 M_1 \gamma_3 - 4 a_3 M_3 \gamma_1  + a_1 (M_3 \gamma_3 - M_1 \gamma_1)  - K M_2 M_3 + \gamma_2 \gamma_3, \nonumber \\
\dot{\gamma}_3 &= 2 a_3 (M_2 \gamma_1 - M_1 \gamma_2) - a_1 M_3
\gamma_2 + K (M_1 ^2 + M_2 ^2) - \gamma_1 ^2 - \gamma_2 ^2.
\nonumber
\end{align}
Evidently, $ a_1 = 0$ gives the first integral $M_3 = const$.

To prove the theorem, we need to find a non-equilibrium solution,
to write down the normal variational equations along this
particular solution and to show that their differential Galois
group is not abelian.

In what follows, we assume  $a_1, a_3 \neq 0$. We also keep $a_1,
a_3 \in \mathbb{R}$, but consider $t, M_j, \gamma_j, j = 1, 2, 3$
to be complex quantities.

\begin{prop}
Let $\alpha$ be an arbitrary real number such that $0 < \alpha^2 <
a_1 ^2$ and $\frac{\alpha^2}{a_1 ^2} \notin \mathbb{Q}$. Then, the
system (\ref{3.1}) admits the following particular solution
\begin{align}
\label{3.2}
\mathbb{G}:   & M_1 = M_3 = \gamma_2 = 0, \\
              & \quad M_2 = sn (a_1 t, \kappa), \quad \gamma_1 = \frac{i \alpha}{M_2} + a_3 M_2, \quad \gamma_3 = - \frac{\dot{M}_2}{M_2}, \nonumber
\end{align}
where $sn$ is the Jacobi elliptic function with a modulus $\kappa^2 =  \alpha ^2 /a_1 ^2$.
\end{prop}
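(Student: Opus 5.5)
The plan is to verify the statement by direct substitution into (\ref{3.1}), in two stages. First I will show that the subspace $M_1=M_3=\gamma_2=0$ is invariant. Then I will check the remaining three equations on it, using the explicit form of $M_2,\gamma_1,\gamma_3$ together with the relation (\ref{1.6}).

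\emph{Stage 1.} On $M_1=M_3=\gamma_2=0$ every monomial on the right-hand sides of the equations for $\dot M_1$, $\dot M_3$ and $\dot\gamma_2$ contains one of $M_1,M_3,\gamma_2$. So these three derivatives vanish and the subspace is invariant. What remains is the three-dimensional system
$$\dot M_2=-M_2\gamma_3,\qquad \dot\gamma_1=\gamma_3(\gamma_1-2a_3M_2),\qquad \dot\gamma_3=2a_3M_2\gamma_1+KM_2^2-\gamma_1^2 .$$

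\emph{Stage 2.} The first equation holds by the very definition $\gamma_3=-\dot M_2/M_2$. For the second, differentiating $\gamma_1=i\alpha/M_2+a_3M_2$ gives $-i\alpha\dot M_2/M_2^2+a_3\dot M_2$. The right-hand side is $-(\dot M_2/M_2)(i\alpha/M_2-a_3M_2)$, which is the same expression, so the second equation holds identically for any function $M_2$.

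The only real content is the third equation. Substituting $\gamma_1$, the terms $\pm 2i\alpha a_3$ cancel, and by (\ref{1.6}) we have $a_3^2+K=-a_1^2$, so the right-hand side becomes $-a_1^2M_2^2+\alpha^2/M_2^2$. The left-hand side is $-(M_2\ddot M_2-\dot M_2^2)/M_2^2$. Hence the whole statement reduces to the single autonomous identity
$$\dot M_2^2-M_2\ddot M_2=\alpha^2-a_1^2M_2^4 .$$

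To check this identity I will use the standard relations for $s=sn(u,\kappa)$. These are $s'^2=(1-s^2)(1-\kappa^2s^2)$ and $s''=-(1+\kappa^2)s+2\kappa^2s^3$. For $M_2=c\,sn(\beta t,k)$ they give
$$\dot M_2^2-M_2\ddot M_2=c^2\beta^2\left(1-k^2s^4\right).$$
The problem is therefore to match constants: we need $c^2\beta^2=\alpha^2$ and $\beta^2k^2/c^2=a_1^2$.

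This bookkeeping of constants is the step I expect to be the delicate one. A first pass with $c=1$, $\beta=a_1$, $k=\kappa=\alpha/a_1$ appears to produce $a_1^2-\alpha^2M_2^4$, that is, the roles of $\alpha$ and $a_1$ interchanged. If that is confirmed, I will repair it using the reciprocal-modulus transformation $k\,sn(u,k)=sn(ku,1/k)$. Rewritten this way, the same phase curve can be expressed with modulus $\alpha/a_1$, which lies in $(0,1)$ by the hypothesis $0<\alpha^2<a_1^2$. Alternatively, one can rescale $M_2$ and $t$ appropriately. I will also double-check the sign conventions of (\ref{3.1}) before committing to the final normalization.

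The hypotheses $0<\alpha^2<a_1^2$ and $\alpha^2/a_1^2\notin\mathbb{Q}$ are not needed for the verification itself. They only ensure a genuine nondegenerate elliptic modulus, so that the solution is a non-equilibrium solution, and they are kept for the later Galois-group computation.
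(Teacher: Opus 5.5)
Your route is the same as the paper's: the paper only says ``the proof is immediate'', i.e.\ substitute into (\ref{3.1}). Every step of your reduction is right, including the identity $\dot M_2^2-M_2\ddot M_2=\alpha^2-a_1^2M_2^4$ and the computation that alarmed you. (\ref{3.1}) does follow from (\ref{1.5}) and (\ref{1.2}), and your identity is invariant under $t\mapsto -t$, so sign conventions are not the culprit.

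The discrepancy is real, and it shows that the proposition is \emph{false as printed}. For $M_2=\mathrm{sn}(a_1t,\kappa)$ one gets $\dot M_2^2-M_2\ddot M_2=a_1^2-\alpha^2M_2^4$. Comparing coefficients of $M_2^4$ forces $\alpha^2=a_1^2$, which the hypothesis excludes. Locally: at a zero of $M_2$, the double-pole coefficient of $\dot\gamma_3$ is $1$, while that of $-\gamma_1^2$ is $\alpha^2/a_1^2=\kappa^2$.

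No other $\gamma_1$ rescues the printed $M_2$ either. On $M_1=M_3=\gamma_2=0$ your reduced equations give $\frac{d}{dt}\bigl((\gamma_1-a_3M_2)M_2\bigr)=0$. So every solution in that plane has $\gamma_1=i\beta/M_2+a_3M_2$ for some constant $\beta$, and hence $\dot M_2^2-M_2\ddot M_2=\beta^2-a_1^2M_2^4$, with the quartic coefficient pinned to $a_1^2$ by (\ref{1.6}). Thus no solution of (\ref{3.1}) in that plane has $M_2=\mathrm{sn}(a_1t,\kappa)$.

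Your planned repair therefore changes the statement rather than proving it. The reciprocal-modulus identity gives $\mathrm{sn}(\alpha t,a_1/\alpha)=\kappa\,\mathrm{sn}(a_1t,\kappa)$. This is a genuine solution, with $\gamma_1=i\alpha/M_2+a_3M_2$ and $\gamma_3=-\dot M_2/M_2$, since $\dot M_2^2-M_2\ddot M_2=\kappa^2a_1^2\bigl(1-\kappa^2\mathrm{sn}^4(a_1t,\kappa)\bigr)=\alpha^2-a_1^2M_2^4$. But it differs from the printed $M_2$ by the factor $\kappa\neq1$, so it is a different phase curve, not the same one rewritten. By the previous paragraph, no rescaling of $M_2$ and $t$ can bring you back to $\mathrm{sn}(a_1t,\kappa)$. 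The correct conclusion is that the proposition must be amended to $M_2=\kappa\,\mathrm{sn}(a_1t,\kappa)$, and you should say so explicitly instead of presenting the repair as a verification.

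This amendment is not cosmetic for the later Galois-group computation you allude to. With the corrected $M_2$ one has $a_1\int M_2\,dt=\ln\bigl(\mathrm{dn}(a_1t,\kappa)-\kappa\,\mathrm{cn}(a_1t,\kappa)\bigr)$, so $I=\exp(a_1\int M_2\,dt)$ is itself elliptic. The hypothesis $\alpha^2/a_1^2\notin\mathbb{Q}$ then no longer gives the non-ellipticity of $I$ on which the paper's Galois-group argument rests.
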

The proof is immediate.

\vspace{2ex}

Denote $\xi_j = d M_j, \eta_j = d \gamma_j, j = 1, 2, 3$. Then,
the normal variational equations (NVE) along the solution
(\ref{3.2}) are
\begin{align}
\label{3.4}
\dot{\eta}_2 &= - \frac{\dot{M}_2}{M_2} \eta_2 - \left(2 a_3 \frac{\dot{M}_2}{M_2} + \frac{ i a_1 \alpha}{M_2} + a_1 a_3 M_2 \right) \xi_1 \nonumber\\
             & \quad - \left( a_1 \frac{\dot{M}_2}{M_2} + \frac{4 i a_3 \alpha}{M_2}   + (4 a_3^2 + K) M_2 \right) \xi_3 , \nonumber \\
\dot{\xi}_1  &= \left(a_1 M_2 + \frac{\dot{M}_2}{M_2} \right) \xi_1 + \left(3 a_3 M_2 + \frac{i \alpha}{M_2} \right) \xi_3, \\
\dot{\xi}_3  &= -a_1 M_2 \xi_3. \nonumber
\end{align}
This system has a convenient triangular structure, so  we can
disregard the first equation of (\ref{3.4}) for now while
considering the subsystem
\begin{align}
\label{3.5}
\dot{\xi}_1  &= \left(a_1 M_2 + \frac{\dot{M}_2}{M_2} \right) \xi_1 + \left(3 a_3 M_2 + \frac{D_1}{M_2} \right) \xi_3, \nonumber \\
\dot{\xi}_3  &= -a_1 M_2 \xi_3
\end{align}
and studying its differential Galois group, which is sufficient
for our purposes. This reduction is justified, because from  the first integral of (\ref{3.4})
$$
d J_1 = \gamma_1 \xi_1 + M_2 \eta_2 + \gamma_3 \xi_3 = const,
$$
which stems from the Casimir $J_1$, we can always recover $\eta_2$.

Since the equations are linear, we obtain
\begin{align}
\label{3.6}
\xi_3 &= \exp (-a_1 \int M_2 dt) \xi_3 ^0,  \nonumber \\
\xi_1 &= M_2 \exp (a_1 \int M_2 dt) \xi_1 ^0 \\
      &+ M_2 \exp (a_1 \int M_2 dt) \int \left( \frac{i \alpha}{M_2 ^2} + 3 a_3\right) \exp (-2 a_1 \int M_2 dt) d t \xi_3 ^0 . \nonumber
\end{align}
Introduce $I:= \exp (a_1 \int M_2 dt)$ and
\begin{equation}
\label{3.7}
\Psi_{33} :=I^{-1}, \quad \Psi_{22} := M_2 I, \quad
\Psi_{23} := \Psi_{22} \int \left( \frac{i \alpha}{M_2 ^2} + 3 a_3 \right) I^{-2} d t .
\end{equation}
Then,
\begin{equation}
\label{3.8}
\xi_1 = \Psi_{22} \xi_1 ^0 + \Psi_{23} \xi_3 ^0, \qquad \xi_3 = \Psi_{33} \xi_3 ^0.
\end{equation}

Thus, the fundamental matrix of (\ref{3.5}) is
\begin{equation}
\label{3.9}
\Psi =
\begin{pmatrix}
 \Psi_{22} & \Psi_{23} \\
 0         & \Psi_{33}
\end{pmatrix}.
\end{equation}

Let the coefficient field $\mathbb{K}$ for the (NVE) and for the
(\ref{3.5}) be the field of elliptic functions with the modulus $\kappa$.

We now  show that  $I$ does not belong to the coefficient field $\mathbb{K}$. Recall that
$$
\int M_2 dt = \int sn (a_1 t, \kappa) dt = \frac{1}{a_1} \ln
\left( dn (a_1 t, \kappa) - \kappa cn (a_1 t, \kappa)
\right)^{\frac{1}{\kappa}},
$$
where $dn, cn$ are the other basic Jacobi elliptic functions. This
integral clearly is not an elliptic function, that is, $\int M_2 dt \notin \mathbb{K}$.

Furthermore,
$$
I = \exp (a_1 \int M_2 dt) = \exp ( \ln \left( dn (a_1 t, \kappa) - \kappa cn (a_1 t, \kappa)  \right)^{\frac{1}{\kappa}})
  = \left( dn (a_1 t, \kappa) - \kappa cn (a_1 t, \kappa)  \right)^{\frac{1}{\kappa}},
$$
which again is not an  elliptic function since $\kappa \notin \mathbb{Q}$.

By adjoining the element $I$ to the field $\mathbb{K}$, we get the following Picard-Vessiot extension $\mathbb{F}_1 : = \mathbb{K} (I)$.
Let $\sigma \in Gal (\mathbb{F}_1 /\mathbb{K})$. Then, $ \sigma (\int M_2 dt) = \int M_2 dt + \nu, \nu \neq 0 $
and $\sigma (I) = c I, c \neq 0, 1$, and therefore
$$
\sigma (\Psi_{33}) = c^{-1} \Psi_{33}.
$$
Clearly, $Gal (\mathbb{F}_1 /\mathbb{K}) \cong \mathbb{C}^*$.
From (\ref{3.7}) we also see that $\Psi_{22} \in \mathbb{F}_1 $, so $\sigma (\Psi_{22}) = c \Psi_{22}$.

It remains to determine whether  $\Psi_{23}$ belongs to $\mathbb{F}_1 $ and more precisely, whether the quadrature $J$ belongs to $\mathbb{F}_1 $
$$
J:= \int \left( \frac{i \alpha}{M_2 ^2} + 3 a_3 \right) I^{-2} d t .
$$
\begin{prop}
\label{prop2}
\begin{equation}
J \notin \mathbb{F}_1 = \mathbb{K} (I).
\end{equation}
\end{prop}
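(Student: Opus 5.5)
We argue by contradiction, using the standard criterion for when an integral of an element of $\mathbb{K}(I)$ can lie in $\mathbb{K}(I)$, where $I$ is transcendental over $\mathbb{K}$ with $I'/I = a_1 M_2 \in \mathbb{K}$.

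Suppose $J \in \mathbb{F}_1 = \mathbb{K}(I)$, and put $f := \left( \frac{i\alpha}{M_2^2} + 3a_3 \right) \in \mathbb{K}$, so that $J' = f I^{-2}$. Since $I$ is transcendental over $\mathbb{K}$ (shown above) and exponential over $\mathbb{K}$, the derivation preserves the grading by powers of $I$ on $\mathbb{K}[I,I^{-1}]$. The only irreducible polynomials $P$ with $P \mid P'$ are the monomials $I^k$. Hence a partial fraction decomposition of $J$ and comparison of poles shows that $J$ must be a Laurent polynomial in $I$. Comparing degrees then gives $J = g I^{-2} + \text{const}$ with $g \in \mathbb{K}$. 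Differentiating, we obtain
\begin{equation*}
g' - 2 a_1 M_2\, g = \frac{i\alpha}{M_2^2} + 3a_3 .
\end{equation*}
So the proposition reduces to showing that this Riccati-type first order linear equation has no elliptic solution $g \in \mathbb{K}$.

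Equivalently, we must show that $\frac{d}{dt}\bigl(g I^{-2}\bigr) = f I^{-2}$ has no solution with $g$ elliptic. I would use a local analysis at the poles and zeros of $M_2 = \mathrm{sn}(a_1 t,\kappa)$.

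\textbf{Zeros of $\mathrm{sn}$.} At $t_0 = 0$ we have $M_2 \sim a_1 t$, so the right side has a double pole $\frac{i\alpha}{a_1^2 t^2}$, while $2a_1 M_2 g$ is small relative to $g'$. If $g$ has a pole of order $m \ge 1$, the leading term of $g'$ is of order $t^{-m-1}$, which forces $m = 1$ with residue $-\frac{i\alpha}{a_1^2}$. The next order must then be checked for consistency. Since $2a_1 M_2 g \sim -2 i\alpha/a_1$ is regular, this is fine, but it produces constraints at order $t^{-1}$: the residue of $g'$ is always $0$, so the coefficient of $t^{-1}$ on the right side must vanish. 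The right side is even in $t$ near $0$, so no contradiction arises here.

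\textbf{Poles of $\mathrm{sn}$.} At $t_1 = iK'/a_1$ we have $M_2 \sim \frac{1}{\kappa a_1 (t-t_1)}$, hence $2a_1M_2 \sim \frac{2}{\kappa (t-t_1)}$. The operator $g \mapsto g' - \frac{2}{\kappa (t-t_1)} g$ acting on $(t-t_1)^m$ gives the coefficient $m - 2/\kappa$, which is nonzero because $\kappa \notin \mathbb{Q}$. Therefore $g$ is locally uniquely determined and regular there. So local analysis alone may not yield a contradiction.

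\textbf{Global obstruction.} I would switch to a global argument. Write $u = \mathrm{sn}$ and change variables to $x = \mathrm{dn} - \kappa\, \mathrm{cn}$, so that $I = x^{1/\kappa}$. Express everything on the elliptic curve and apply a residue (divisor) argument. Multiply the equation by a suitable elliptic differential and use the fact that the sum of residues of an elliptic differential vanishes. Concretely, the ansatz $h := g\,I^{-2}$ satisfies $h' = f I^{-2}$, and consider the monodromy of $J$ around the period cell. If $J = gI^{-2}$ with $g$ elliptic, then $J$ transforms under each period shift exactly as $I^{-2}$ does, namely by the multiplier $c_\omega^{-2}$. But $J(t+\omega) - c_\omega^{-2} J(t)$ is a constant $\lambda_\omega$, computed as a period integral of $fI^{-2}$ along the corresponding cycle. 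It therefore suffices to exhibit one period $\omega$ with $\lambda_\omega \ne 0$.

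\textbf{Main obstacle.} This is the hardest step. I would evaluate $\lambda_\omega$ using the explicit form $I^{-2} = (\mathrm{dn} - \kappa\,\mathrm{cn})^{-2/\kappa}$ and its multipliers along the real and imaginary periods. The multipliers are $1$ along $4K$ and nontrivial, $\ne 1$ by irrationality of $\kappa$, along the period through the poles. Alternatively, I would find a contradiction in the residue/coefficient matching around the poles of $x$, where the irrational exponent $2/\kappa$ prevents cancellation. If a clean period computation fails, the fallback is to expand both sides at a pole of $M_2$ and compare with the parity structure forced by the symmetry $t \mapsto -t$ of $\mathrm{sn}$.
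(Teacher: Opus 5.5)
\textbf{Gap: your local analysis is correct, but you never close the argument.} Your reduction is fine. The partial-fraction and degree argument correctly gives $J = gI^{-2} + \mathrm{const}$ with $g\in\mathbb{K}$ and $g' - 2a_1M_2\,g = \frac{i\alpha}{M_2^2} + 3a_3$. This is exactly the paper's equation for $V$, and your derivation is more careful than the paper's invariance claim $\sigma(I^2J)=I^2J$. Your local analysis at the zeros and poles of $M_2$ is also correct. The problem is what comes next. You conclude that ``local analysis alone may not yield a contradiction'' and switch to a monodromy/period computation. You never carry that computation out, and you flag it yourself as the main obstacle. As it stands, the proposal does not prove the proposition.

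\textbf{The missing step is the residue theorem applied to $g$ itself.} This is what the paper does, and your local analysis already supplies everything it needs:
\begin{itemize}
\item At each zero $t_0$ of $M_2$ you have $M_2^2 = a_1^2(t-t_0)^2 + O((t-t_0)^4)$, whatever the sign of $\dot M_2(t_0)$. So $g$ has a simple pole with residue $-i\alpha/a_1^2$ at \emph{both} zeros of $M_2$ in a period parallelogram, not just at $t=0$.
\item At the poles of $M_2$, a pole of $g$ would force its (negative integer) leading exponent to equal $\pm 2/\kappa\notin\mathbb{Z}$, so $g$ has no pole there.
\item At all other points the equation has holomorphic coefficients and right-hand side, so $g$ is holomorphic there.
\end{itemize}
Hence the residues of the elliptic function $g$ in a period parallelogram add up to $-2i\alpha/a_1^2\neq 0$. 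This contradicts the fact that an elliptic function has zero total residue.

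\textbf{Your fallback criterion is also not right as stated.} First, $I$ has monodromy $e^{\pm 2\pi i/\kappa}$ around each pole of $\mathrm{sn}$, so ``the multiplier along a period'' depends on the path. Second, on a cycle with multiplier $c_\omega^{-2}\neq 1$, replacing $J$ by $J+C$ changes $\lambda_\omega$ by $C(1-c_\omega^{-2})$. So a single nonzero $\lambda_\omega$ is an obstruction by itself only on cycles with $c_\omega^{-2}=1$.
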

\begin{proof}
Suppose, to the contrary,  that   $J \in \mathbb{F}_1 = \mathbb{K} (I)$.
Evidently,
\begin{equation}
\label{3.10}
\dot{J} = \left( \frac{i \alpha}{M_2 ^2} + 3 a_3 \right) I^{-2} \in \mathbb{F}_1 .
\end{equation}
Since $\sigma (I^2 J) = I^2 J $, there exists $V \in \mathbb{K}$, such that
\begin{equation}
\label{3.11}
J = V I^{-2} .
\end{equation}
Differentiating (\ref{3.11}) and combining it with (\ref{3.10}) yields
\begin{equation}
\label{3.12}
\dot{V} - 2 a_1 M_2 V = \frac{i \alpha}{M_2 ^2} + 3 a_3 .
\end{equation}
We will show that this equation has no solution in $\mathbb{K}$.

Recall that $M_2 = sn (a_1 t, \kappa)$ is an odd function and it has two simple poles and two simple zeroes in the parallelogram of the periods.
Suppose that $t_0$ is a zero of $M_2$. Then, near $t_0$ we have
$$
M_2 = \pm a_1 (t-t_0) + O ((t-t_0)^3).
$$
If $V$ is a meromorphic solution of (\ref{3.12}), it should have the expansion
$$
V = \frac{v_{-1}}{t-t_0} + v_0 + v_1 (t-t_0) + \ldots .
$$
Comparing the coefficients in front of $(t-t_0)^{-2}$ on the both sides of (\ref{3.12}) gives $v_{-1} = - i \alpha /a_1 ^2$. Hence,
\begin{equation}
\label{3.13}
Res_{t = t_0} V = - \frac{i \alpha}{a_1 ^2}.
\end{equation}
Let $t_p$ be a pole of $M_2$ and denote $\tau = t - t_p$. We have
$$
M_2 = \frac{\rho}{\tau} + O (\tau), \quad \rho = \pm \frac{1}{a_1 \kappa}.
$$
Suppose $V$ has a pole at $t_p$
$$
V = r_m \tau ^m + \ldots , m \in \mathbb{Z}_{-}, r_m \neq 0.
$$
Then,
$$
\dot{V} -2a_1 M_2 V = \left( m -2 a_1 \rho \right) r_m \tau^{m-1} + \ldots.
$$
The right-hand side of (\ref{3.12}) is regular at $t_p$.
In order the left-hand side to be regular at $t_p$, one needs
$$
m = 2a_1 \rho = \pm \frac{2}{\kappa},
$$
which is impossible, because $m \in \mathbb{Z}_{-}$, while $\kappa \notin \mathbb{Q}$.

It is also clear that $V$ does not have poles at the regular points of $M_2$. So, the only possible poles of $V$ are in the two
simple zeroes of $M_2$. It follows from (\ref{3.13}) that
$$
\sum Res V = - \frac{2 i \alpha}{a_1 ^2} \neq 0,
$$
which is a contradiction to the theorem about the sum of the residues of an elliptic function. Therefore, (\ref{3.12}) has no
solution $V \in \mathbb{K}$, hence $J$ does not belong to  $ \mathbb{F}_1$.
\end{proof}

Adjoining the element $J$, we get another Picard-Vessiot extension
\begin{equation}
\mathbb{K} \subset \mathbb{F}_1 \subset \mathbb{F}_2 := \mathbb{F}_1 (J).
\end{equation}
The extension  $\mathbb{F}_2$ is a Liouville extension.
Therefore, $G = Gal (\mathbb{F}_2/\mathbb{K})$ is a sol\-va\-ble group and so is its identity component $G^0$.
If $\sigma \in Gal (\mathbb{F}_2/\mathbb{K})$ we have
\begin{equation}
\label{3.14}
\sigma \Psi =
\begin{pmatrix}
 \sigma(\Psi_{22}) & \sigma(\Psi_{23}) \\
 0                 & \sigma(\Psi_{33})
\end{pmatrix} = \Psi R_{\sigma}.
\end{equation}
Clearly, the matrix $R_{\sigma}$ takes the form
\begin{equation}
\label{3.15}
R_{\sigma} =
\begin{pmatrix}
r_{22} & r_{23} \\
0      & r_{33}
\end{pmatrix},
\end{equation}
and hence,
\begin{equation}
\label{3.16}
\sigma \Psi =
\begin{pmatrix}
 \sigma(\Psi_{22}) & \sigma(\Psi_{23}) \\
 0                 & \sigma(\Psi_{33})
\end{pmatrix} =
\begin{pmatrix}
\Psi_{22} r_{22}                   & \Psi_{22} r_{23} + \Psi_{23} r_{33} \\
0                                  & \Psi_{33} r_{33}
\end{pmatrix}.
\end{equation}
We already know that
$$  r_{22} = c, \quad  r_{33} = c^{-1}.$$

Next, making use of the fundamental relation of the differential
Galois theory $ \sigma \circ \partial = \partial \circ \sigma$, we
get
\begin{align*}
\sigma (\Psi_{23}) &= \sigma (\Psi_{22} \int \left( \frac{i \alpha}{M_2 ^2} + 3 a_3 \right) I^{-2} d t) = \sigma (\Psi_{22})\sigma (\int \left( \frac{i \alpha}{M_2 ^2} + 3 a_3 \right) I^{-2} d t)\\
                   &= c \Psi_{22}  (\int \left( \frac{i \alpha}{M_2 ^2} + 3 a_3 \right) \sigma( I^{-2}) d t     + \nu_2)  \\
                   &=  c^{-1} \Psi_{22} \int \left( \frac{i \alpha}{M_2 ^2} + 3 a_3 \right) I^{-2} d t) + c \nu_2 \Psi_{22} \\
                   &=  c^{-1} \Psi_{23} +  c \nu_2 \Psi_{22}.
\end{align*}
Hence, $r_{23} =  c \nu_2, \nu_2 \in \mathbb{C}$. It follows from Proposition \ref{prop2}  that $\nu_2 \neq 0$.

Having all the entries of $R_{\sigma}$, we obtain the matrix representation of the Galois group $G = Gal(\mathbb{F}_2/\mathbb{K})$
of the system (\ref{3.5}). This group  is connected $G = G^0$ and solvable, namely
\begin{equation}
\label{3.17}
G^0 = \Bigg\{\begin{pmatrix}
c & c \nu_2 \\
0 & c^{-1}
\end{pmatrix}, c \neq 0, 1, \nu_2 \in \mathbb{C}^* \Bigg\}.
\end{equation}
As can be seen, $G^0$ is  not abelian.

Now, our result follows immediately from  the Morales - Ramis theorem (Theorem \ref{thMR}).

\vspace{2ex}

{\bf Acknowledgements}.
\noindent
My sincerest thanks go to Dr. G. Georgiev for his valuable
remarks and suggestions, which have helped me to considerably
improve the manuscript.
 This work is partially supported by grant KP-06-RILA/7  of Bulgarian National Science Fund.

\end{document}